\newtheorem{proof}{Proof}
\def\BibTeX{{\rm B\kern-.05em{\sc i\kern-.025em b}\kern-.08em
    T\kern-.1667em\lower.7ex\hbox{E}\kern-.125emX}}
\newcommand{\methodname}{{\tt{RC-TIM}}}
\newtheorem{theorem}{Theorem}
\newtheorem{lemma}[theorem]{Lemma}
\let\OLDthebibliography\thebibliography
\renewcommand\thebibliography[1]{
  \OLDthebibliography{#1}
  \setlength{\parskip}{0pt}
  \setlength{\itemsep}{0pt plus 0.3ex}
}
\begin{document}\sloppy

% Example definitions.
% --------------------
\def\x{{\mathbf x}}
\def\L{{\cal L}}

% Title.
% ------
\title{A Renegotiable  contract-theoretic incentive mechanism for Federated learning}
%
% Single address.
% ---------------
\author{Xavier Tan$^{1}$, Xiaoli Tang$^{1}$, Han Yu$^{1}$ \\
$^1$College of Computing and Data Science, Nanyang Technological University, Singapore
}
% \author{ICME'25 Paper No.: 532}

%Address and e-mail should NOT be added in the submission paper. They should be present only in the camera ready paper. 

\maketitle
\begin{abstract}
 Federated learning (FL) has gained prominence due to heightened concerns over data privacy. Privacy restrictions limit the visibility for data consumers (DCs) to accurately assess the capabilities and efforts of data owners (DOs). Thus, for open collaborative FL markets to thrive, effective incentive mechanisms are key as they can motivate data owners (DOs) to contribute to FL tasks. Contract theory is a useful technique for developing FL incentive mechanisms. Existing approaches generally assume that once the contract between a DC and a DO is signed, it remains unchanged until the FL task is finished. However, unforeseen circumstances might force a DO to be unable to fulfill the current contract, resulting in inefficient utilization of DCs' budgets. To address this limitation, we propose the \underline{R}enegotiable \underline{C}ontract-\underline{T}heoretic \underline{I}ncentive \underline{M}echanism (\methodname) for FL. Unlike previous approaches, it adapts to changes in DOs' behavior and budget constraints by supporting the renegotiation of contracts, providing flexible and dynamic incentives. Under \methodname{}, an FL system is more adaptive to unpredictable changes in the operating environment that can affect the quality of the service provided by DOs.
 %In this way, it promotes honest participation in FL, while mitigating associated risks, thereby fostering a competitive and sustainable FL environment. 
 Extensive experiments on three benchmark datasets demonstrate that \methodname{} significantly outperforms four state-of-the-art related methods, delivering up to 45.76\% increase in utility, on average.
\end{abstract}
\begin{IEEEkeywords}
  Federated learning, Contract theory, Incentive mechanism
\end{IEEEkeywords}
\section{Introduction}
\label{sec:intro}
Federated learning (FL) has attracted significant research attention in recent years. To support open collaboration among data consumers (DCs) and data owners (DOs) under FL settings, it is important to provide DOs with suitable incentives. This is especially true for DOs whose primary tasks are not training machine learning models (e.g., hospitals, financial institutions) as their participation in FL often necessitates the diversion of valuable resources away from their primary tasks \cite{tan2023reputation}. %However, a gap remains in developing effective incentive mechanisms to encourage honest contributions from data owners (DOs), whose participation is critical for global model convergence but may divert resources from their primary tasks \cite{tan2023reputation}. 
In addition, due to privacy requirements, DCs cannot directly observe DOs' capabilities or effort, leading to information asymmetry that can result in inefficient incentivation outcomes \cite{smith2004contract}. 

\begin{figure}[!b]
    \centering
    \includegraphics[width=0.85\linewidth]{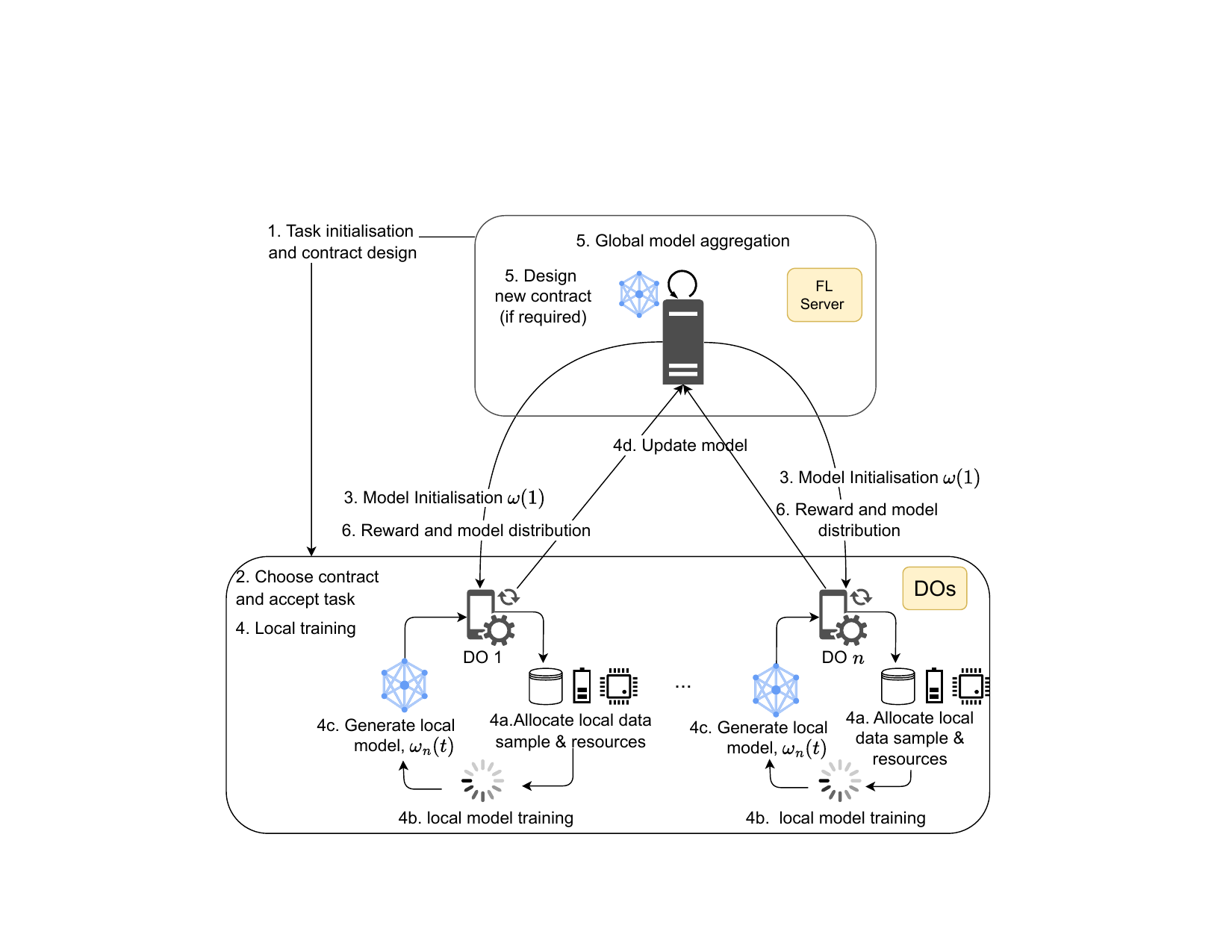}
    \caption{Illustration of the workflow of \methodname{}.}
    \label{fig:flowsummary}
\end{figure}

To this end, Contract Theory (CT) \cite{smith2004contract} has been adopted in the design of many FL incentive mechanisms \cite{kang2019incentive,lim2020dynamic,li2022contract}. The general approach is to examine how entities can reach optimal agreements when facing conflicting interests and asymmetrical information. In contract-based FL, DCs offer a set of contracts specifying the expected contribution levels and corresponding rewards, thereby allowing DOs to select from these contracts based on their respective types to join FL. The self-revealing nature of CT helps elicit optimal provisions even in the presence of information asymmetry. 

Existing contract-based FL faces practical limitations. Firstly, current methods assume full commitment from both DCs and DOs throughout the execution of a contract, which might not always be feasible. DOs might drop out due to network issues, limited battery life or intentional semi-honest behaviors, thereby resulting in partial contract fulfillment. Secondly, crafting optimal contracts requires DCs to accurately estimate the necessary resources (e.g., budget, time, data, compute power, communication bandwidth) to complete the training tasks. In real-world competitive environments, projects frequently exceed budgets or timelines, thereby necessitating to contract renegotiation or extension \cite{charette2005software}.
Moreover, FL tasks involve multiple training rounds, raising concerns about the sustained commitment of DOs \cite{bolton1990renegotiation}. The gap between expected and actual contributions can lead to uncertainty in rewards \cite{xu2024reciprocal}, resulting in suboptimal participation and budget over-estimation. Thus, it would benefit both the DC and DOs to periodically adjust their cost-reward expectations. Consequently, committing to a single contract for the entire duration of an FL task is neither feasible nor efficient.
Existing contract-based FL methods \cite{kang2019incentive,lim2020dynamic} are unable to accommodate dynamic contract renegotiation. 

To bridge this important gap, we propose the \underline{R}enegotiable \underline{C}ontract-\underline{T}heoretic \underline{I}ncentive \underline{M}echanism (\methodname) for FL. It is a two-stage renegotiable contract framework designed to be resilient to type misreporting or misrepresentation by considering real-time observations of DO behaviors and resource usage. In the first phase, an initial contract was created for the DC based on prior estimations. As more observations on DOs are gradually collected, the contract is revised via Bayesian updating based on the probability of a DO's type. This way, initial discrepancies in compensation adjustments can be addressed. To our knowledge, \methodname{} is the first renegotiable contract-based FL method, providing recourse regarding the distribution of DO types and reducing the strict reliance on precise distribution assumptions for optimal contract design. Extensive experiments on three benchmarking datasets demonstrate that \methodname{} significantly outperforms four state-of-the-art related methods, delivering improvements of up to 32\% compared to their average utility yield.

\section{Related works}
\label{related}
% In this paper, we propose the use of contract theory as an incentive mechanism to motivate DOs to contribute their resources truthfully. 
Contract theory provides a framework for understanding and designing agreements in situations where different parties have different information \cite{smith2004contract}. Contracts are designed to be individually rational and incentive compatible to motivate DOs to reveal their true capabilities and exert optimal effort \cite{smith2004contract}. However, the design of contract theory depends on knowing the distribution of DOs' types, which may not be apparent at the outset.

% Incentive mechanisms are essential for maintaining a sustainable and efficient FL system, but substantial challenges remains \cite{ tu2022incentive}. Due to the privacy preserving nature of FL, information asymmetry exists. Information asymmetry refers to a situation where one agent has more information than the other agent. This imbalance can lead to inefficiencies and market failures, as the agent with less information may make decisions that are not in their best interest. In the case of FL, the FL task owner wants to minimise cost of hiring while the DOs wants to maximise their reward. Contract theory studies how arrangements can be structured to align incentives and encourage cooperative behavior among parties. 

Kang et al. \cite{kang2019incentive} proposed to use Contract Theory and the reputation mechanism to address the challenge of selecting reliable DOs and incentivizing their participation in FL. The contract design involves specifying the resources DOs should contribute (i.e., data, compute), and the corresponding rewards they will receive. In \cite{lim2020dynamic,yu2024contract,yang2024asynchronous, cao2024federated}, Contract Theory is used in different FL settings to motivate DO participation while ensuring sustainable and privacy-preserving collaboration among them. Similarly in \cite{lim2020information}, Contract Theory is leveraged to incentivize privacy-preserving FL training while ensuring low service latency and age of information. The use of Contract Theory to attract DOs to participate in FL training by providing greater rewards to DOs with better data quality has been explored in \cite{li2022contract}. This is achieved by framing the optimization problem from the DOs' perspective instead of the DCs'. Taking a step further, \cite{lim2021towards} introduced a multidimensional contract aimed to maximize profits in FL systems. It takes into account multiple factors that influence overall profitability. 

Despite the success of Contract Theory in FL incentive mechanism design, existing research does not make provision for contract renegotiation, which makes them lack of the necessary flexibility to respond to changes in situations facing DCs and DOs in practice.
\methodname{} addresses this limitation via gradual adjustments to the incentive structure as more information becomes available. It ensures that the contract aligns with the actual type of the DO over time. 
%To the best of our knowledge, \methodname{} is the first of its kind that updates the probability of a DO belonging to a certain type through Bayesian updating.

\section{The Proposed Approach}
\subsection{Contract-based FL System Model}
We consider a scenario where a set of DOs $\mathcal{N} = \{1, \ldots, n, \ldots N\}$ are available to participate in an FL task hosted by a DC. The task aims to train a global model $\omega(\mathcal{T})$ over a duration of $\mathcal{T}$ global communication rounds or until the model reaches a pre-defined target generalization accuracy, whichever occurs first. A high-level overview of the process is illustrated in Figure \ref{fig:flowsummary}.
%For simplicity, we assume a one to one correspondent between task and FL server. The FL tasks would entail the server to train a global model $\omega_f(t)$ for a duration of $t \in [0, \mathcal{T}]$ global communication rounds or until the model attains a specific threshold level of performance for its generalization accuracy, whichever comes first. The summary of the flow is shown in Figure 
During each communication round $t \in [0, \mathcal{T}]$, the model undergoes training through the following five stages. 
% , and we provide a detailed explanation of the workflow below.The overview of the framework is illustrated in Figure \ref{fig:overview} and 

%Summary of process should be 1. FL design menu \Upsilon(e,R) 2. DO choose \upsilon from \Upsilon 3. FL training R~P_e 4. DO recieves \upsilon(R), owner recieves U(e, \upsilon(R)).

\textbf{Stage 1: Task initialisation and contract design.} 
At this stage, the DC initiates a task to train the global model $\omega(t)$, by minimizing the loss function, with performance denoted by $\xi(\omega(t))$. The DC would classify the available DOs into $K$ categories, forming the set $\Theta = \{ \theta_1, \ldots, \theta_K\}$, based on their data sample size levels, sorted in ascending order: $ \theta_1 <  \ldots < \theta_k < \ldots \theta_K, \forall k \in \{1, \ldots, K\}$. A larger $\theta_K$ signifies that the DO possesses a large amount of available data, which in turn might expedite the FL training task. The DC provides contract $\Upsilon = (R_k, e_k)$ for category $\theta_k$, where $e_k = x_k \cdot d_k$ represents the effort required from a DO of type $k$, expressed as the product of the number of local epochs $x_k$ and the data sample size $d_k$.  $R_k$ is the reward for fulfilling the contract. This initial classification can evolve as more information becomes available through repeated interactions. In the presence of information asymmetry, the DC lacks direct knowledge of DO types and must infer the probability, $\rho_k$, that a DO belongs to type $k$ based on available observations, where $\sum^K_k \rho_k = 1$.

\textbf{Stage 2: DO contract selection and acceptance.}
% This first attempt in profiling DO types is described as the state of the world, and may change as more information is revealed through repeated interaction. In situations of information asymmetry, the FL task owner does not have direct access to the DO's type information and can only infer the likelihood of the DO belonging to a specific type based on observations, which satisfies $\sum^K_k \rho_k = 1$. 
After receiving the contract menu $\Upsilon$ from the DC, each DO selects and accepts the one that maximizes its utility. 
% to DOs willing to participate. From $\Upsilon$, DO chooses the payout function $\upsilon$ that aligns with their capabilities and willingness to contribute their data. The contract menu is designed to account for varying levels of effort and data quality. DOs select a contract that maximizes their utility, while the FL server accepts DO in the FL training as long as the budget allows. 

\textbf{Stage 3: Model initialisation.}
Once the contract is accepted, the DC transmits the current global model, $\omega(t)$, to the participating DOs.

\textbf{Stage 4: Local model training and update.}
% The objective of global model optimization, $\xi(\omega(t))$, is achieved by minimizing the weighted average of the local loss functions of the participating DOs \cite{mcmahan2017communication}. 
Each DO, $n \in \mathcal{N}$, trains its local model, $\omega_n(t)$, over a designated number of local epochs, $x_n$, using its own dataset and computational resources to maximize local model accuracy. It then submits the model updates to the DC.
% The optimization of the global model's accuracy, $\xi(\omega_f(t))$, involves minimizing the weighted mean of participants' local loss functions \cite{mcmahan2017communication}. %Each participating DO will train its local model, $\omega_n(t)$, for some number of local epochs, $x_k$, using its individual dataset and computational resources to maximise the local model's accuracy rate.

\textbf{Stage 5: Model aggregation and \methodname{}.}
After receiving DO model updates, the DC performs model aggregation using FL algorithms like FedAvg \cite{mcmahan2017communication} to update the global model $\omega(t)$. After multiple rounds of FL training, the DC can better ascertain the probability of DOs being associated with specific types and revise their respective optimal contract, $\Upsilon^*$, that is more suitable for both parties' interest. If both parties agree to the new terms, the revised contract $\Upsilon^*$ replaces the original one; otherwise, the original contract remains in effect.  
% Coupled with observations of the DOs' behavior and the current conditions, the FL task owner may aim to resolve prior uncertainties and design a new contract, $\Upsilon^1_f$. Both parties would have to agree to the new contract $\Upsilon^1_f$ for it to be held and previous contract will be void, otherwise $\Upsilon^0_f$ will still be outstanding. Upon receiving all the local updates, the FL servers will perform model aggregation following algorithms such as FedAvg \cite{mcmahan2017communication} to update the global model $\omega_f(t)$. 

\textbf{Stage 6: Reward and model distribution.}
Upon contract fulfillment by the DO, the DC provides the agreed reward. 
%The FL training process then continues, repeating stages 4 through 6 until either $t = \mathcal{T}$ or the task is successfully completed. 
The new updated model will also be distributed at this stage.
% \section{System model}

\subsection{Data Owners' Energy Cost}
The communication cost of DO $n$ for each training round $t$ with DC $f$ is given by:
\begin{equation}
 \label{eq:communication_cost_1}
    E^{comm}_{n}(t) =  T^{comm}_n p^{trans}_n  =  \frac{s_n(t)\cdot p^{trans}_n}{z_{n}},
\end{equation}
where $p^{trans}_n$ represents the transmission power of DO $n$, and $s_n(t)$  is the size of the local model, $\omega_n(t)$. We assume that the model size is constant across all participants engaged in the same FL task, as they are training the same global model.
% Since the time required for downloading the global model is relatively much shorter than the time needed for uploading the local model \cite{zhan2020incentive}. We only consider the communicational time of uploading of a local model update. 
The transmission rate, $z_{n}$, for DO $n$ given a transmission bandwidth $\beta$, is defined as: 
\begin{equation}
z_{n} = \beta \ln \left( 1 + \frac{\kappa_{n} p^{trans}_n}{\mathcal{H}_0} \right),
\end{equation}
where $\kappa_{n}$ is the channel gain of the link between DO $n$ and DC, while $\mathcal{H}_0$ represents the background noise. Thus, $E^{comm}_{n}(t)$ in Eq. \eqref{eq:communication_cost_1} could be rewritten as:
\begin{equation}
 \label{eq:communication_cost}
 E^{comm}_{n}(t) = \frac{s_n(t) \cdot p^{trans}_n}{ \beta \ln ( 1 + \frac{G_{n} p^{trans}_n}{\mathcal{H}_0} )}.
 \end{equation}
Each type of DO can allocate varying amounts of resources, which directly influences the number of local epochs they perform. The computational energy per global training round, denoted as $E^{comp}_n(x_n)$, is given by: $E^{comp}_n(x_n) = P^{cmp}_n \cdot T^{cmp}_n(x_n), $
% \begin{equation}
%  \label{eq:computation_cost_1}
% E^{comp}_n(x_n) = P^{cmp}_n \cdot T^{cmp}_n(x_n), 
% \end{equation}
where $P^{cmp}_n = \zeta_n \nu_n^2 \digamma_n$ is the computational power of DO $n$, and $T^{cmp}_n(x_n) = \frac{\mu_n d_n x_n}{\digamma_n}$ represents the computational time for local training. Here, $d_n$ refers to the amount of data that DO $n$ provides for training the FL model, and $\nu_n$ is the supply voltage required by DO $n$'s processor. The parameter $\mu_n$ denotes the total number of CPU cycles required to train a unit of data, while $\digamma_n$ represents the operating frequency of DO $n$'s CPU. Lastly, $\zeta_n$ is the effective load capacitance of DO $n$'s computational chip-set, and $x_n$ represents the number of local training rounds that DO $n$ performs during the global communication round. 
% per global training round, is mathematically defined as: $P^{cmp}_n = \zeta_n {\nu}_n^2 \digamma_n$ and $T^{cmp}_n(x_n)= \frac{\mu_n d_n x_n}{\digamma_n}$. Where $d_n$ is the amount of data sample DO $n$ will supply to train the FL model. ${\nu}_n$ is the supply voltage required of DO $n$'s processor. $\mu_n$ represent the total number of CPU cycles required for DO to train a unit size of data.  $\digamma_n$ denotes $n$'s CPU operating frequency, and $\zeta_n$ represents the effective load capacitance of $n$'s computation chip-set. $x_n$ represents the number of local round DO $n$ will train for that particular global communication round.

Then, the total cost for DO $n$ to participate in training FL server's, $f$, model during a single global communication round $t$ is the sum of computational and communication costs, expressed as:
% The total cost for DO to participate in FL for a single global communication round consists of computational and communicational costs, $
\begin{equation}
 \label{eq:total_cost}
C^{total}_{n}(x_n,t)= \gamma_n \cdot(E^{cmp}_{n}(x_n) + E^{comm}_{n}(t)), 
\end{equation}
where $\gamma_n$ is the cost conversion factor that translates DO $n$'s energy consumption into a monetary or resource-based cost.
% $. Where $\gamma_n$ is the cost conversion parameter for DOs' energy consumption. 

\subsection{Utility Functions}
% In our context, the FL servers play the role of \textit{principal} while the participating DOs plays the role of \textit{agents}. 
% The effort contributed by a DO of type $k$ is denoted as $e_k$, which is defined as the product of two key factors: $e_k = d_k x_k$, where $d_k$ represents the product of the data quantity, and $x_k$ denotes the number of local training rounds. 
The utility of a type $k$ DO, under the contract offered by the DC, quantifies the benefit or value that the DO derives from their participation. Taking into account the rewards received and the costs incurred, it is expressed as:
% We denote the effort exerted by a DO of type $k$ as $e_k$, where $e_k = d_k x_k$. Here, effort represents the amount of data quantity and local training rounds that the DO can contribute. The utility function for DOs of type $k$ who is under server $f$'s contract its own preferences to contribute $e_k$ level of effort for a global communication round will yield:
\begin{align}
\label{eq:utility_DO}
    U_k(e_k)= \theta_k R_k -  C^{total}_k(e_k),
\end{align}
% As shown in Eq. \eqref{eq:utility_DO}, the utility $U^f_n(e_k)$ quantifies the benefit or value that the DO derives from their participation, taking into account both the rewards offered and the costs incurred, reflecting the DO's preferences and the level of effort they choose to exert during a global communication round.
where $e_k$ denotes the effort contributed by a DO of type $k$ as described above. Assuming rational behavior, DOs are likely to act in their own self-interest by selecting contracts that maximize their utility, seeking the highest rewards while minimizing associated costs. In other words, DOs will be more inclined to remain with the DC whom they have had established a strong rapport (i.e., recognized as high type). Therefore, a DO will aim to maximize its utility by selecting the most favorable contract available from the offered set, ensuring the best balance between reward and effort. The utility-maximizing behavior can be expressed as:
\begin{align}
    & \underset{(R_k, e_k)}{\max} U_k(e_k) = \rho_k b_k \biggr( \theta_k R_k -  C^{total}_k(e_k) \biggr),
\end{align}
where $\rho_k$ is the probability of the DO belonging to $k^{th}$ type. The binary variable $b_k$ indicates whether a DO of type $k$ is selected to participate in the FL task; if the DO is selected, $b_k = 1$, otherwise $b_k = 0$. It is important to note that, we assumed that each DO can participate in only one task at a time. The total utility for each DC, based on all the contracts offered, can be expressed as:
\begin{align}
%1 contract for each type. 
\label{server_utility}
    & U = \sum^K_k  \rho_k \biggr ( Q [\xi(\omega_k) ] + \ln \biggr[ T_{max} \\
    & - \frac{\mu_k e_k}{\digamma_k} - T^{comm}_k \biggr] - \theta_k R_k \biggr ),  \nonumber
\end{align}
where $Q$ is the revenue conversion function based on model performance. 
% The DO will be rewarded with $R^f_k$, this is also known as the the accounting cost for executing the task \note{??}. 

\subsection{The \methodname{} Algorithm}
For DO $n$ to prefer the contract $k$ offered by DC over other available options, the contract must satisfy the following essential requirement.

\textbf{Definition 1.} (Incentive Compatibility) ensures that DOs are incentivized to truthfully disclose their private information and select contract that maximizes their expected utility,
\begin{align}
\label{IC}
&   \theta_k R_k - C^{total}_k(e_k) \geqslant   \theta_j R_j - C^{total}_k(e_j), \nonumber \\
& \forall k,j \in \{1, \ldots, K \} , k \neq j.
\end{align}

\textbf{Definition 2.} (Individual Rationality) ensures that DOs are not worse off by participating, meaning each DO will only contribute if their expected utility is non-negative
\begin{align}
\label{IR}
    U_k(e_k) =  \theta_k R_k - C^{total}_k(e_k)  \geqslant 0.
\end{align}

\textbf{Definition 3.} (Budget feasibility) requires that the total payment per global communication round must not exceed the maximum budget, $B^{max}$, predefined by the DC: $\sum^K_k \theta_k R_k \leq B^{max}$.

% \subsection{Optimal contract formulation and \methodname}
To maximize utility, the DC must balance total payments to DOs against the global model's performance. However, maximizing the objective defined in Eq. \eqref{server_utility} does not meet the requirements for convex optimization, making it difficult to derive an optimal solution directly. To address this, we first relax the incentive compatibility and individual rationality constraints, then iteratively verify solutions against Local Downward Incentive Compatibility (LDIC) and Local Upward Incentive Compatibility (LUIC) constraints \cite{celik2006mechanism}. In conjunction with the monotonicity constraints, both upward and downward incentive compatibility can be maintained \cite{yu2024contract}. As a result, the IC constraints can be considered to be the reduced IC problem. These conditions ensures that no agent (i.e., DOs) has the incentive to misreport their type (either under-reporting or over-reporting) to the best of their knowledge. Therefore, promoting honest participation and efficient outcomes. 
% Detailed proofs and further mathematical transformations are provided in the appendix.
\subsection{Proofs for LDIC and LUIC}
\begin{lemma}
    If $\theta_1$'s IR constraint is satisfied, all IR constraint for other higher types can be reduced.
\end{lemma}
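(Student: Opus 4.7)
The plan is to prove the reduction by chaining together the incentive compatibility constraint between a higher type and type~$1$ with the IR constraint of type~$1$, and to exploit the type ordering $\theta_1<\cdots<\theta_K$.

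First, I would fix an arbitrary type $\theta_k$ with $k>1$ and start from the IC constraint (Eq.~\eqref{IC}) between type $k$ and type $1$, namely
\begin{equation*}
\theta_k R_k - C^{total}_k(e_k) \;\geq\; \theta_k R_1 - C^{total}_k(e_1).
\end{equation*}
Second, since types are sorted so that $\theta_k \geq \theta_1$, and rewards are non-negative (otherwise no DO would sign), I would observe $\theta_k R_1 \geq \theta_1 R_1$, which upgrades the right-hand side to
\begin{equation*}
\theta_k R_k - C^{total}_k(e_k) \;\geq\; \theta_1 R_1 - C^{total}_k(e_1).
\end{equation*}

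Third, I would invoke the implicit single-crossing/monotonic-cost property of the model: for a fixed effort level, a DO of higher type is no more costly than a DO of lower type, i.e.\ $C^{total}_k(e_1)\leq C^{total}_1(e_1)$ whenever $k\geq 1$. This is the natural counterpart of the $\theta_k R_k$ scaling in the utility (Eq.~\eqref{eq:utility_DO}): a larger-data, better-provisioned DO finds it cheaper to supply a given $e$. Combining this with the previous inequality gives
\begin{equation*}
\theta_k R_k - C^{total}_k(e_k) \;\geq\; \theta_1 R_1 - C^{total}_1(e_1) \;\geq\; 0,
\end{equation*}
where the last step is exactly the assumed IR condition for type $\theta_1$ (Eq.~\eqref{IR}). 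Since $k$ was arbitrary in $\{2,\ldots,K\}$, every higher type's IR constraint is implied, so only the IR constraint of $\theta_1$ needs to be retained in the optimization.

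The main obstacle I anticipate is step three: the paper does not explicitly state the cost-monotonicity property across types, so I would need to either (i) argue it from the parameter conventions (higher $\theta_k$ corresponds to DOs with more favorable resource profiles, hence weakly smaller $\gamma_k$, $\mu_k$, or equivalent cost terms in Eq.~\eqref{eq:total_cost}), or (ii) formally add it as a mild regularity assumption, as is standard in adverse-selection contract design. The remaining steps are algebraic manipulations of the IC/IR inequalities and require no further structure.
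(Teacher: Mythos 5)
Your proposal is correct and follows essentially the same route as the paper's proof: chain the IC constraint between type $k$ and type $1$ with the ordering $\theta_k \geq \theta_1$ (and $R_1 \geq 0$) to land on type $1$'s IR condition. The only substantive difference is your third step: the paper's proof simply writes the right-hand side of the IC inequality as $\theta_k R_1 - C^{total}_1(e_1)$, silently replacing the cost term $C^{total}_k(e_1)$ that appears in its own Definition~1 with $C^{total}_1(e_1)$, whereas you correctly isolate this as requiring the cross-type cost comparison $C^{total}_k(e_1) \leq C^{total}_1(e_1)$ and propose to justify it or state it as an assumption. In other words, the ``main obstacle'' you anticipate is a real gap that the paper's own proof papers over rather than resolves, so your version is the more careful of the two; making the cost-monotonicity assumption explicit is the right call.
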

\begin{proof}
    It is given that,
    \begin{align}
        & \theta_k R_k - C^{total}_k(e_n) \geqslant \theta_k R_1 - C^{total}_1(e_1), \\
        & \theta_k R_1 - C^{total}_1(e_1) \geqslant \theta_1 R_1 - C^{total}_1(e_1), 
    \end{align}
    we can reduce the IR constraints to the following,
    \begin{align}
    \label{IRweak}
        \theta_1 R_1 - C^{total}_1(e_1) = 0.
    \end{align}
\end{proof}
\begin{lemma}
\label{lemmaIC}
Monotonicity: If $\theta_k \geqslant \theta_j$, then it must be true that $e_k \geqslant e_j $ such that inevitably $R_k \geqslant R_j$, where $k$ is a higher type than $j$, $\forall k,j \in \{1, \ldots , K\} $. 
\end{lemma}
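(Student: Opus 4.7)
The plan is to combine the pairwise incentive-compatibility inequalities for types $k$ and $j$ in the standard contract-theoretic way; the resulting cross-IC inequality forces both orderings directly.

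First I would write the IC of type $k$ against the menu item designed for type $j$, namely $\theta_k R_k - C^{total}_k(e_k) \geqslant \theta_k R_j - C^{total}_k(e_j)$, together with the symmetric IC of type $j$, $\theta_j R_j - C^{total}_j(e_j) \geqslant \theta_j R_k - C^{total}_j(e_k)$. Rearranging gives the two bounds $\theta_k(R_k-R_j)\geqslant C^{total}_k(e_k)-C^{total}_k(e_j)$ and $\theta_j(R_k-R_j)\leqslant C^{total}_j(e_k)-C^{total}_j(e_j)$. Subtracting them and using the single-crossing condition that is implicit in the Stage~1 type classification (higher-$\theta$ DOs, being those endowed with more data and the corresponding compute/communication profile, face a weakly smaller incremental cost for the same effort increment) makes the cross-type cost difference on the right non-negative, so the inequality collapses to $(\theta_k - \theta_j)(R_k - R_j)\geqslant 0$. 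Since $\theta_k\geqslant\theta_j$ by hypothesis, this forces $R_k\geqslant R_j$.

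For the effort ordering I would plug $R_k\geqslant R_j$ back into the first IC inequality to get $C^{total}_k(e_k)-C^{total}_k(e_j)\leqslant\theta_k(R_k-R_j)$, and separately derive $C^{total}_k(e_k)-C^{total}_k(e_j)\geqslant 0$ from the second IC together with the same single-crossing step. Monotonicity of $C^{total}_k$ in effort --- immediate from the expressions for $E^{cmp}_n$ and $E^{comm}_n$, where the computational piece grows linearly with the workload $e_k = x_k d_k$ and the communication piece is non-decreasing --- can then be inverted to yield $e_k \geqslant e_j$.

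The main obstacle, I expect, is that the model does not isolate single-crossing (the Spence--Mirrlees condition) as a standalone assumption: the cost subscript tracks the DO's physical parameters $\gamma_n,\mu_n,\digamma_n$ rather than the type $\theta_k$ directly. The proof has to pause and argue that the Stage~1 classification aligns those parameters monotonically with $\theta_k$, because otherwise the pairwise-IC algebra only shows that $R_k$ and $e_k$ co-move, not that either is aligned with $\theta_k$ across types.
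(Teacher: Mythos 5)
Your proposal follows essentially the same route as the paper: both combine the two pairwise IC inequalities into a product-form inequality linking $(\theta_k-\theta_j)$ with $(R_k-R_j)$ and $(e_k-e_j)$, differing only in whether you extract the reward ordering or the effort ordering first. The single-crossing obstacle you flag is real but is equally present in the paper's own proof, which silently treats the DO-indexed cost coefficients $\mu_k\zeta_k\nu_k^2$ as aligned with $\theta_k$ and swaps $C^{total}_k(e_j)$ for $C^{total}_j(e_j)$ without comment, so your version is, if anything, the more careful one.
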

\begin{proof}
According to definition of IC and Eq. \eqref{IC}, we know for sure that
\begin{align}
    & \theta_k R_k - C^{total}_k(e_k) \geqslant \theta_k R_j - C^{total}_j(e_j), \\
    & \theta_j R_k - C^{total}_j(e_j) \geqslant \theta_j R_k - C^{total}_k(e_k),
\end{align}
Combining the above two equation will yield us:$(\frac{1}{\theta_j} - \frac{1}{\theta_k})(e_k - e_j) \geqslant 0, $
% \begin{align}
%     (\frac{1}{\theta_j} - \frac{1}{\theta_k})(e_k - e_j) \geqslant 0, 
% \end{align}
and $(R_k - R_j) \geqslant C^{total}_k(e_k) - C^{total}_j(e_j)$ which can be further simplified into: 
\begin{align}
\label{ICworking}
    (R_k - R_j) \geqslant \mu_{k} \zeta_{k} {\nu}_{k}^2(e_{k}-e_{j}).
\end{align}
In other words, $(R_k \geqslant R_j)$ is true if and only if $e_k \geqslant e_j$, thus monotonicity must be held.
\end{proof}
\begin{lemma}
In conjunction with lemma \ref{lemmaIC}, the IC constraints can therefore be further reduced as a pair of LDIC and LUIC constraints, 
$\theta_k R_k - (\mu_k \zeta_k {\nu}_k  e_k) \geqslant \theta_k R_{k-1}- (\mu_{k-1} \zeta_{k-1} {\nu}_{k-1}^2  e_{k-1}), k \in \{2, \ldots, K \},  k \in \{2, \ldots, K \} $ and $\theta_k R^f_k - (\mu_k \zeta_k {\nu}^2_k  e_k) \geqslant \theta_k R_{k+1}^f- (\mu_{k+1} \zeta_{k+1} {\nu}_{k+1}^2  e_{k+1}), k \in \{1, \ldots, K-1 \}$ respectively. 

However, because of the monotonicity as aforementioned, the following can be deduced, $\theta_{k+1}(R_k - R_{k-1}) \geqslant \theta_{k}(R_k - R_{k-1}) \geqslant \mu_{k} \zeta_{k} {\nu}_{k}^2(e_{k}-e_{k-1})$. Thereafter, we can combine and simplify the pair of LDIC and LUIC to be:
\begin{align}
\label{reducedpair}
    & \theta_k R_k - (\mu_{k} \zeta_{k} {\nu}_{k}^2  e_{k}) \geqslant \theta_k R_{k-1}- (\mu_{k-1} \zeta_{k-1} {\nu}_{k-1}^2  e_{k-1}).
\end{align}
\end{lemma}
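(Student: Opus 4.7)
The plan is to collapse the full set of $K(K-1)$ pairwise IC constraints onto the single chain of adjacent LDIC constraints in Eq.~\eqref{reducedpair} by a two-step reduction that repeatedly invokes the monotonicity established in Lemma~\ref{lemmaIC}. The first step reduces global IC to the adjacent LDIC/LUIC pair; the second step uses monotonicity to retain LDIC as the active constraint and drop LUIC.

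For the first step I would induct on $|k-j|$. The base case $|k-j|=1$ is the adjacent LDIC/LUIC pair itself. For the inductive step, to show that type $k+1$ has no incentive to mimic a non-adjacent lower type $k-1$, I would chain the adjacent LDIC between $k+1$ and $k$ with the adjacent LDIC between $k$ and $k-1$. These two are linked precisely by the hinted identity $\theta_{k+1}(R_{k}-R_{k-1}) \geq \theta_{k}(R_{k}-R_{k-1}) \geq \mu_{k}\zeta_{k}\nu_{k}^{2}(e_{k}-e_{k-1})$, in which the left inequality is pure monotonicity ($\theta_{k+1}\geq\theta_{k}$ with $R_{k}\geq R_{k-1}$) and the right inequality is LDIC for type $k$. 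The left inequality upgrades the LDIC of type $k$ so that its leading coefficient matches $\theta_{k+1}$; adding it to the LDIC of type $k+1$ makes the intermediate reward and effort terms telescope, and the non-adjacent IC between $k+1$ and $k-1$ falls out. Iteration handles all further-apart lower types, and the upward IC constraints follow by a symmetric argument.

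For the second step I would show that once the adjacent LDIC binds for every pair and the monotonicity of $\{R_{k}\}$ and $\{e_{k}\}$ holds, the LUIC is automatically satisfied. Taking the binding LDIC for type $k+1$, namely $\theta_{k+1}(R_{k+1}-R_{k}) = \mu_{k+1}\zeta_{k+1}\nu_{k+1}^{2}e_{k+1}-\mu_{k}\zeta_{k}\nu_{k}^{2}e_{k}$, and multiplying both sides by $\theta_{k}/\theta_{k+1} < 1$, I obtain $\theta_{k}(R_{k+1}-R_{k}) < \mu_{k+1}\zeta_{k+1}\nu_{k+1}^{2}e_{k+1}-\mu_{k}\zeta_{k}\nu_{k}^{2}e_{k}$, which is exactly LUIC for type $k$ after rearrangement. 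Hence LUIC is slack whenever LDIC binds and monotonicity holds, and the IC system collapses to the single rung in Eq.~\eqref{reducedpair}.

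The step I expect to be the main obstacle is the coefficient bookkeeping in the first step, because the chain identity pairs each effort increment $e_{k}-e_{k-1}$ with the type-$k$ cost coefficient $\mu_{k}\zeta_{k}\nu_{k}^{2}$, so when summing several rungs of LDIC the cost coefficient shifts with each rung; the telescoping is only clean under the convention (implicit in Eq.~\eqref{ICworking}) that these per-type coefficients combine into a common multiplicative factor. I would therefore carry out the $|k-j|=2$ case in full before invoking the general induction, to keep the cancellation pattern transparent.
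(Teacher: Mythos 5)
Your proposal is correct and follows essentially the same route as the paper: both arguments rest on (i) the monotonicity chain $\theta_{k+1}(R_k-R_{k-1}) \geqslant \theta_k(R_k-R_{k-1}) \geqslant \mu_k\zeta_k\nu_k^2(e_k-e_{k-1})$ to extend adjacent IC to non-adjacent types, and (ii) the observation that at the DC's optimum the LDIC binds, whence LUIC is slack and only Eq.~\eqref{reducedpair} survives. The paper's own proof is considerably terser --- it simply asserts that the DC drives the LDIC to equality and then invokes Eq.~\eqref{ICworking} --- so your explicit induction on $|k-j|$ and your derivation of LUIC from the binding LDIC supply the details the paper leaves implicit (including the coefficient-bookkeeping caveat, which you rightly flag given the paper's inconsistent use of $\mu_k\zeta_k\nu_k^2(e_k-e_{k-1})$ versus $\mu_k\zeta_k\nu_k^2 e_k - \mu_{k-1}\zeta_{k-1}\nu_{k-1}^2 e_{k-1}$).
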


\begin{proof}
    From Eq. \eqref{IRweak}, it is in the interest of the DC to reduce $R_1$ as much as possible, such that they could maximise their utility yield (i.e., hiring DOs at cost price $\theta_1 R_1 - C^{total}_1(e_1) = 0$). This applies to LDIC as well, the DC would want to reduce reward value until $\theta_k R_k -C^{total}_1 = \theta_k R_{k-1} - C^{total}_1 (e_{k-1})$. This can be reformatted as $\theta_k R_k- \theta_k R_{k-1} = (\mu_n \zeta_n {\nu}^2_k  e_k-e_{k-1})$ combining with Eq. \eqref{ICworking} we can derive our reduced IC constraint Eq. \eqref{reducedpair}.
\end{proof}
Assuming that all DOs experience similar communication conditions across all communication rounds, meaning that for any $t$, we have $E^{comm}_1 = E^{comm}_2 = \ldots = E^{comm}_k$ for all $k \in K$, and that $\gamma_k$ remains constant, the utility function defined in Eq. \eqref{server_utility} is re-written as: 
\begin{align}
\label{obj1}
    % & \underset{(R_k, e_k)}{\max} U(e_k, R_k) = \sum^K_{k=1}  \rho_k \biggr ( Q \biggr[ \xi(\omega_k) \biggr ]+ \ln \biggr[ T_{max} \nonumber\\
    % &  - \frac{\mu_k e_k}{\digamma_k} - T^{comm}_k  \biggr] - \theta_k R_k \biggr ). 
    \underset{(R_k, e_k)}{\max} U(e_k, R_k), 
\end{align}
subjected to:
\begin{align}
& \theta_1 R_1 - C^{total}_1(e_1) = 0, \label{eq:IRconstraint} \\ 
& \theta_k R_k - (\mu_k \zeta_k {\nu}_k^2  e_k) \label{eq:ICconstraint} \geqslant   \\
& \theta_k R_{k-1}- (\mu_{k-1} \zeta_{k-1} {\nu}_{k-1}^2  e_{k-1}), k \in \{2, \ldots, K \}, \nonumber \\
& \sum^K_{k=1}  \theta_k  R_k \leqslant B^{max}, \forall k \in K .\label{eq:Budgetconstraint} 
\end{align}
% A common approach to address the original problem is to first solve the relaxed problem and then verify that the solution meets the monotonicity condition. 
Eq. \eqref{eq:IRconstraint} - Eq. \eqref{eq:Budgetconstraint} represent the reduced versions of IC and IR requirements, as well as the budget constraints, respectively. By systematically incorporating the constraints as referenced from \cite{lim2020information}, we can derive $R_k$ as:
\begin{align}
\label{Rformula}
    R_k = \sum^K_{k=2} \frac{1}{\theta_k}\mu_{k} \zeta_{k} {\nu}_{k}^2(e_{k}-e_{k-1}) + \frac{1}{\theta_1}\biggr(C^{total}_1(e_1) \biggr).
\end{align}
In Eq. \eqref{Rformula}, the optimal reward $R_k$ is now dependent on the DO's effort $e_k$. This allows us to effectively solve Eq. \eqref{obj1} using a single variable.  In other words, we can iteratively determine the optimal contract reward $R_k(e_k)$ based on the set of feasible effort levels that each DO can provide and thereafter ensure that the solution satisfies the monotonicity constraint; whereby lower-type DOs exerting lesser effort receive a smaller reward compared to higher-type agents exerting greater effort. Substituting $R_k$ into the function for total expected rewards, $\sum^K_k \rho_k \theta_k R_k$, we can derive the total rewards required for DOs of types $k$ across the probability distribution $\rho_k$ :
\begin{align}
    \sum^K_k \rho_k \theta_k R_k = \sum^K_k X_k + \frac{C^{comm}_1}{\theta_1} \sum^K_k \theta_k \rho_k.
\end{align}
For $k < K$, we have:
\begin{align}
    \label{closedform}
    X_k = \mu_k \zeta_k {\nu}_k^2 e_k  + \mu_k \zeta_k {\nu}_k^2 e_k (\frac{1}{\theta_k}-\frac{1}{\theta_{k+1}}) \sum^K_{i=k+1}\theta_i \rho_i ,
\end{align}
if $k = K$, $ X_k = \mu_K \zeta_K {\nu}_K^2  e_K$. $X_k$ represents as a substitution variable to maintain the clarity and conciseness of the equations.
% \begin{align}
% \label{closedform}\left\{ \begin{array}{rcl}
% \mu_k \zeta_k {\nu}_k^2 e_k  + \mu_k \zeta_k {\nu}_k^2 e_k (\frac{1}{\theta_k}-\frac{1}{\theta_{k+1}}) \sum^K_{i=k+1}\theta_i \rho_i 
% &, k < K \\ 
% \mu_K \zeta_K {\nu}_K^2  e_K
% &, k = K
% \end{array}\right.
% \end{align}
Using the closed-form solution, we can reduce the objective function to a single-variable problem. By applying convex optimization techniques, the optimal effort $\hat{e}_k$ and corresponding reward $\hat{R}^f_k$ can be derived. 
Thereafter, we derived that $\frac{\delta ^2U}{\delta e_k^2} \leqslant 0$, thus showing that it has a maximum point. Initially, we assumed a uniform distribution for the DO types. The number of local training epochs for each DO could then be determined as $x_k = \frac{e_k}{d_k}$. After $t$ communication rounds, the DC would have gained new observations, denoted as $\psi$, on the behaviors of participating DOs. These updated observations can be used to refine prior beliefs about the DO types. Using Bayes' theorem, the DC can update their type distribution as $\Pr(\theta=k|\psi) \propto \Pr(\psi)\Pr(\psi|\theta=k)$, incorporating the new information into their decision-making process. Then, the new type probability distribution becomes:  
\begin{align}
    \Pr(\theta=k | \psi) = \frac{\Pr(\psi|\theta=k)\Pr(\theta=k)}{\Pr(\psi)} ,
\end{align}
where $\Pr(\psi)$ can be calculated using the law of total probability: $\Pr(\psi) = \sum^K_{k=1} \Pr(\psi|\theta=k)\Pr(\theta=k)$. Thereafter, DC can propose the newly drafted contract, $\Upsilon^*$, to the participating DOs. To ensure budget feasibility, the DC checks that the model is converging as expected and that there is sufficient budget remaining at the point of re-contracting. Specifically, the DC verifies that the conditions $\sum^{T/a}_t \sum^K_k \theta_k R_k \leqslant \frac{B^{max}_f}{a}$ and $\omega(t) \leqslant \omega(t-1)$ held at round $t = \frac{T}{a}$, where $a$ represents a predefined partition of the total training rounds. The new contract will be designed according to Eq. \eqref{closedform}, ensuring it better aligns with the actual behaviors and capabilities observed, thereby enhancing both the fairness and effectiveness of the agreement. The pseudo-code for \methodname{} is exhibited as Algorithm \ref{alg:FedAlgo}.
\begin{algorithm}
\caption{\methodname}\label{alg:FedAlgo}
\begin{algorithmic}

\State \textbf{Initialize}: $\omega_f(0)$; $T_{max}$; training parameters.
\State Formulate the menu of contracts for DOs types based on prior probability distribution, $\Upsilon = (R_k(e_k) | \mathcal{P}_k)$;
\State Each DO $n$ choose preferred contract and participate in FL task;
\State DC publishes $\omega(0)$ to all participating DOs;
\While{ $t < T$ }
   \State When budget expenditure is available, perform FL training with FedAvg
    \If{$k =K$}
        \State $\hat{R}_k = \mu_k \zeta_k {\nu}_k^2  e_k $ ;
    \Else
        \State $\hat{R}_k = \mu_k \zeta_k {\nu}_k^2 e_k  + \mu_k \zeta_k {\nu}_k^2 e_k (\frac{1}{\theta_k}-\frac{1}{\theta_{k+1}})$ ;
    \EndIf
    \If{$t = \frac{T}{a}$ \& $\sum^{T/a}_t \sum^K_k R_k \leqslant \frac{B^{max}_f}{a}$ \& $\omega(t) \leqslant \omega(t-1) $}
        \State update $\psi$ and probability distribution;
        \State $\Pr(\theta=k | \psi) = \frac{\Pr(\psi|\theta=k)\Pr(\theta=k)}{\Pr(\psi)} $ ;
        \State Reformulate contract,  $\Upsilon^* = [ R^k(e_k)| (\mathcal{P}_k | \psi) ] $ ;
        \If{$n$ accepts new contract}
            \State $\Upsilon^*$ takes effect, and DC pays the new $\hat{R}^k$ to DO $n$ of type $k$.
        \Else 
            \State $\Upsilon$ is still effective.
        \EndIf
    \EndIf        
\EndWhile
\end{algorithmic}
\end{algorithm}

\section{Experimental Evaluation}
\label{exp}
In this section, we evaluate \methodname{} against four other state-of-the-art (SOTA) approaches based on three benchmarking datasets. 
\begin{table}[ht]
\centering
\caption{Experiment parameters.}
\label{tab:parameters}
\resizebox{1\linewidth}{!}{
\begin{tabular}{|r|r|r|r|r|l|}
\hline
\textbf{Parameters} & \textbf{Value} & \textbf{Parameters} &  \textbf{Value} & \textbf{Parameters} & \textbf{Value}\\
\hline
 Budget & $400$ & $T_{max}$ & $1500ms$ & Batch Size & $128$ \\ 
 Total Rounds & $50$ & Total DOs & $45$ & Learning rate & $0.01$ \\
 DO starting price & $0.2$ &  $\gamma_f$ & $0.003$ &  $E^{cmp}_n$ & $0.01$ \\
 Momentum & $0.9$ & Number of types & $10$ & $E^{comm}_n$ & $0.1$ \\
\hline
\end{tabular}
}
\end{table}
\subsection{Experimental Settings}
The model used for training on the CIFAR-10 dataset \cite{krizhevsky2009learning} consists of 1,006,206 parameters, structured with two convolutional neural network (CNN) blocks followed by three fully connected (FC) layers, including dropouts before the third layer. The model used for EMNIST \cite{cohen2017emnist} balanced dataset comprises 907,491 parameters and has similar architecture to CIFAR-10's, but features only one input channel and no dropout layers. 

The model for MNIST \cite{lecun_bottou_bengio_haffner_1998} has 21,840 parameters, consisting of two simple CNN and max-pooling blocks, followed by two smaller FC layers. Initially, the $45$ DOs are split into $10$ different types as $\{1, \ldots , 10\}$, with a uniform probability of $0.1$. For the revenue conversion function we set $Q(\xi(\omega_k) )=2 \cdot \xi(\omega_k) $. We tested our approach alongside four other SOTA methods in both IID and non-IID scenarios. For the \methodname{} approach, we assume that DOs initially have a uniform prior distribution of types. After 25 of 50 communication rounds, if the DC observes that a DO’s type deviates from expectations, it can update its belief on DO's type and propose a new contract, provided certain conditions are satisfied. Other hyperparameter settings are documented in Table \ref{tab:parameters}.
% In our experiment design, we allow only one re-contract with the checkpoint at the halfway mark.

\begin{table}[ht]
\centering
\caption{Simulation Results.}
\label{tab:results}
\resizebox{0.8\linewidth}{!}{
\begin{tabular}{|r|r|c|c|}
\hline
\multicolumn{2}{|r|}{} & \textbf{IID} & \textbf{Non-IID}\\
\hline
\textbf{Method} & \textbf{Dataset} & \multicolumn{2}{c|}{\textbf{Utility} ($\times10^2$)}  \\
\hline
\multirow{3}{*}{\methodname{}} & MNIST &  $\mathbf{77.20}$ & $\mathbf{73.37}$ \\
                     & EMNIST-balanced &  $\mathbf{79.20}$ & $\mathbf{59.29}$ \\
                     & CIFAR10 &  $\mathbf{44.48}$ & $\mathbf{43.18}$ \\
\hline                     
\multirow{3}{*}{Contract} & MNIST &  $76.76$ & $70.20$ \\
                     & EMNIST-balanced &  $77.16$ & $58.66$ \\
                     & CIFAR10 &  $34.67$ & $38.70$ \\
\hline   
\multirow{3}{*}{GTG-SV} & MNIST &  $68.37$ & $56.47$ \\
                     & EMNIST-balanced &  $76.98$ & $58.31$ \\
                     & CIFAR10 &  $36.97$ & $28.58$ \\  
\hline                        
\multirow{3}{*}{OORT} & MNIST &  $67.66$ & $54.82$ \\
                     & EMNIST-balanced &  $73.44$ & $53.65$ \\
                     & CIFAR10 &  $35.05$ & $23.82$ \\    
\hline
\multirow{3}{*}{RRAFL} & MNIST &  $62.06$ & $53.16$ \\
                     & EMNIST-balanced &  $68.22$ & $56.16$ \\
                     & CIFAR10 &  $29.07$ & $31.04$ \\    
\hline
\end{tabular}
}
\label{tab:experiment}
\end{table}

\subsection{Comparison Baselines}
We compared our approach against four other state-of-the-art approaches that have demonstrated strong performance in the context of incentive-based client selection in FL. 
% \textbf{GTG-SV} \cite{liu2022gtg}: Shapley value(SV) is a method used to evaluate the contribution level of each participant, where rewards are distributed proportionately based on each participant's associated SV. To reduce SV's computational complexity, a guided estimation approach proposed in \cite{liu2022gtg} was implemented.  
\begin{enumerate}
    \item \textbf{GTG-SV} \cite{liu2022gtg}: Shapley value (SV) is a method used to evaluate the contribution level of each participant, where rewards are distributed proportionately based on each participant's associated SV. To reduce the computational complexity of SV, GTG-SV provides a guided estimation method.    
    % The method computes a value for each participant's contribution against all other permutations of participant groups. Thereafter, rewards are distributed proportionately based on each participant's associated SV. To reduce SV's computational complexity, a guided estimation approach proposed in \cite{liu2022gtg} was implemented.
    \item \textbf{OORT} \cite{lai2021oort}: This approach gradually and effectively selects participants, balancing the exploration-exploitation dilemma to identify high-performing contributors.
    \item \textbf{RRAFL} \cite{zhang2021incentive}: A reputation-aware incentive mechanism designed within a reverse auction framework, aiming to select the top $k$ number of reputable participants while adhering to a given budget constraint.
    \item \textbf{Contract}: This approach is based purely on an optimal contract, with rewards corresponding to Eq. \eqref{Rformula} and Eq. \eqref{closedform}.
\end{enumerate}
% \subsection{Results}
% Figures \ref{fig:contracts_compare_MNIST_IID}, \ref{fig:contracts_compare_EMNIST_IID} and \ref{fig:contracts_compare_CIFAR10_IID}  presents the simulation results of the approaches under the MNIST, EMNIST and CIFAR10 dataset under IID scenarios, respectively. Figures \ref{fig:contracts_compare_MNIST_NIID}, \ref{fig:contracts_compare_EMNIST_NIID} and \ref{fig:contracts_compare_CIFAR10_NIID} presents the results in non-IID scenarios in the same order.

\subsection{Results and Analysis}
The experimental results are shown in Table \ref{tab:results}, with the best results in bold.
The table shows that \methodname{} outperforms other state-of-the-art approaches, especially in non-IID scenarios. To illustrate, on the MNIST non-iid dataset,  \methodname{} scored a total utility of $4.52\%$, $29.93\%$, $33.84\%$, $38.02\%$ higher than the basic contract-based approach, GTG-SV, OORT, RRAFL, respectively. While under a more challenging dataset, CIFAR10 under non-iid scenario, \methodname{} scored a total utility of $11.58\%$, $51.08\%$, $81.27\%$, $39.11\%$  higher than the other approaches in the same order.

\section{Conclusions and Future Work}
\label{conclusion}
In this paper, we introduced an incentive mechanism based on contract theory, \methodname{}, to align rewards with the capabilities and preferences of DOs. However, contract theory relies on assumptions about the DOs' types, which may be inaccurate due to issues such as misreporting, changes in DO behavior over time, or incorrect assumptions about the probability distribution of DOs' types. By leveraging on the fact that FL occurs over multiple communication rounds, we capitalize on the new information revealed during these rounds, enabling us to update the probability of a DO belonging to a specific type, and therefore reformulate a new contract menu. Extensive experimental results demonstrated that \methodname{} achieved superior utility results, especially in non-IID scenarios, compared to existing state-of-the-art methods. To the best of our knowledge, \methodname{} is the first renegotiable contract theory-based incentive mechanism designed for federated learning.

In subsequent research, we plan to improve \methodname{} by investigating more effective methods for updating the contract and developing a more dynamic approach to determine the optimal condition for the DC to initiate contract review for each individual DO.

\section*{Acknowledgements}
\thanks{This research/project is supported, in part, by the National Research Foundation, Singapore and DSO National Laboratories under the AI Singapore Programme (AISG Award No: AISG2-RP-2020-019), under Energy Research Test-Bed and Industry Partnership Funding Initiative, part of the Energy Grid 2.0 programme, and under DesCartes and the Campus for Research Excellence and Technological Enterprise (CREATE) programme; Alibaba Group through Alibaba Innovative Research (AIR) Program and Alibaba-NTU Singapore Joint Research Institute (JRI) (Alibaba-NTU-AIR2019B1), Nanyang Technological University, Singapore; the RIE 2020 Advanced Manufacturing and Engineering (AME) Programmatic Fund (No. A20G8b0102), Singapore; Nanyang Technological University, Nanyang Assistant Professorship (NAP); and Future Communications Research \& Development Programme (FCP-NTU-RG-2021-014).}

\bibliographystyle{IEEEbib}
\bibliography{main}

\end{document}